\documentclass[letterpaper, 10 pt, conference]{ieeeconf}  
\usepackage[final]{graphicx}
\usepackage{graphicx}
\usepackage{amsmath}
\usepackage{amsfonts}
\newtheorem{theorem}{Theorem}
\newtheorem{lemma}{Lemma}

\usepackage[switch]{lineno}

\IEEEoverridecommandlockouts                              

\title{\LARGE \bf
Curvature-Guided Safety Filters: State-Dependent Hessian-Weighted Projection with Provable Performance Bounds
}

\author{Ziyan Lin and Liang Xu%
\thanks{Ziyan Lin and Liang Xu are with the Institute of Future Technologies, Shanghai University, 
No.~99 Shangda Road, Baoshan District, Shanghai 200444, China 
(e-mail: ziyan\_lin@shu.edu.cn; liang-xu@shu.edu.cn). 
Liang Xu is the corresponding author.}%
\thanks{This work is supported by the National Natural Science Foundation of China under Grant 62373239, 62333011, 62461160313, and the State Key Laboratory of Space Intelligent Control under grant No. HTKJ2025KL502025.}
}

\begin{document}
\maketitle
\thispagestyle{empty}
\pagestyle{empty}

\begin{abstract}
Safety filters provide a lightweight mechanism for enforcing state and input safety in learning-enabled control. However, common Euclidean projections onto the safe set disregard long-term performance, while directly optimizing the action-value function within the safe set can be nonconvex and computationally prohibitive. This paper proposes a state-dependent, Hessian-guided projection for safety filtering that preserves convexity while improving performance. The key idea is to select a weighted projection matrix from the curvature of the action-value function, thereby biasing the correction toward action directions with higher value sensitivity. We establish (i) a uniform bound on the performance gap between the weighted projection and the safe value-optimal action, and (ii) a condition under which the weighted projection outperforms the Euclidean projection in long-term value. To support black-box controllers, we further present a data-driven construction of the weighted projection matrix via an iterative Q-function learning algorithm with quadratic feature blocks and regularization that enforces curvature dominance and bounded higher-order terms. Simulations on a quadrotor tracking-and-avoidance task indicate that the proposed filter maintains safety while reducing value degradation relative to Euclidean projection, with computational overhead compatible with real-time operation.
\end{abstract}
\begin{keywords}
Safety filter, safe reinforcement learning, action-value function, Hessian-weighted projection.
\end{keywords}
\section{INTRODUCTION}

With the rapid development of industrial control systems\cite{stouffer2011guide}, autonomous driving\cite{yurtsever2020survey}, and the Internet of Things\cite{fraga2016review}, system safety has become a central concern, directly affecting public safety and the protection of life and property. Traditional control methods mainly focus on stability analysis around set points and reference trajectories\cite{mandal2006introduction}, and often face difficulties in problems with conflicting safety constraints or unstructured environments\cite{osman2010controlling}. Advanced control algorithms such as reinforcement learning (RL) can significantly improve performance but are typically ``black-box'', making their interaction with complex environments hard to predict and potentially risky\cite{10.5555/2789272.2886795}. To address this and jointly consider safety and performance in open, dynamic settings, safety filters (SFs) have recently attracted increasing attention as lightweight, real-time safety mechanisms\cite{wabersich2023data,attar2025data,escudero2025safety}.

The core idea of safety filters is to adjust control commands online so that the system remains in a prescribed safe set while minimally modifying the nominal controller. This allows timely intervention when the system is close to unsafe behavior, preventing constraint violations while retaining as much of the original control performance as possible. Most widely used approaches are projection-based, including Control Barrier Functions (CBFs)\cite{cheng2019end,marvi2021safe,ames2014control,li2024control} and Safety Predictive Filters (SPFs)\cite{wabersich2021predictive}, which explicitly enforce safety by projecting the nominal action onto an admissible set\cite{krasowski2023provablysafereinforcementlearning}. By contrast, reachability-based methods\cite{10068193,10365337,9561561} use Hamilton–Jacobi (HJ) analysis to compute invariant safe sets and backup controllers; although not projection in a strict sense, they can be interpreted as overriding unsafe actions with safe alternatives. In this work, we focus on projection-based safety filters due to their convex structure, computational efficiency, and compatibility with learning-based controllers.

Most existing safety filters enforce safety by minimizing the distance between the nominal action and a certified safe action. However, Euclidean projection can be suboptimal near the safe set boundary where actions must be modified, as conservatism arises mainly from the geometry of the safe set. Several approaches have been proposed to better trade off safety and performance. The CBF-CLF-QP framework enforces safety and stability simultaneously by solving a quadratic program that couples CBFs with Control Lyapunov Functions (CLFs)\cite{choi2020reinforcement,bahreinian2025designing,10907778}, but feasibility can be lost under conflicting constraints or strong model uncertainty. Constrained Markov Decision Processes (CMDPs)\cite{10160991,paternain2022safe,wu2024off} relax safety to expected constraints and optimize long-term performance, but do not guarantee instantaneous safety. Gros et al.\cite{gros2020safe} directly optimize the action-value function within the safe set, achieving a tight safety–performance trade-off, but the resulting nonconvex problem becomes computationally expensive in high-dimensional systems. Designing a safety filter that simultaneously offers strict safety guarantees, good performance, and real-time efficiency thus remains challenging.

{
The main contributions of this work are as follows:
\begin{itemize}
    \item[(1)] We introduce a curvature-guided safety filter that replaces Euclidean projection with a state-dependent Hessian-weighted metric, improving performance while preserving convexity and real-time efficiency.
    
    \item[(2)] We provide performance guarantees under mild regularity assumptions, including a bound on the value loss and a condition under which the Hessian-weighted projection outperforms the Euclidean one.

    \item[(3)] We propose a data-driven scheme to construct the weighting matrices when the action-value function is unavailable. Quadrotor simulations demonstrate improved tracking and lower value degradation with similar computational cost.
\end{itemize}
}

\textbf{Paper Organization:} Section II reviews reinforcement learning, safety filters, and safe Q-learning via projection. Section III introduces the state-dependent weighted projection matrix and establishes performance bounds. Section IV presents a data-driven construction method for black-box controllers. Section V demonstrates the approach on a quadrotor tracking task. Section VI concludes and outlines future directions.

\section{BACKGROUND}
\subsection{Reinforcement Learning}

Reinforcement learning problems are modeled as Markov Decision Processes (MDPs), where at each state $x$ the agent selects an action $u$, receives a reward, and transitions to a new state. The long-term return is summarized by the Bellman value equation
$V(x)=\sum_{u}\pi(u\mid x)\sum_{x'}P_{xx'}^u\!\left[R_{xx'}^u+\gamma V(x')\right],$
where $x'$ denotes the next state, $V(x)$ is the state-value function, $\pi(u \mid x)$ denotes the probability of choosing action $u$ in state $x$, $R_{xx'}^u$ represents the immediate reward, $\gamma$ is the discount factor, and $P_{xx'}^u$ is the transition probability.

Under deterministic policies, the action $u$ is uniquely determined by the state:
$\pi_\star(x)=\arg\max_{u}V(x).$
The Q-function (action-value function) measures the expected cumulative reward when taking action $u$ in state $x$:
\begin{equation}
Q(x, u) = \sum_{x'} P_{xx'}^u \left[ R_{xx'}^u + \gamma V(x') \right]
\label{eq:q_function}
\end{equation}
In Q-learning algorithms, this function is updated iteratively to optimize the control law. The goal is to find action $u_{\text{ref}}$ approximates the optimal action:
\begin{equation}
u_{\text{ref}} = \arg\max_{u} Q_{\boldsymbol{\theta}}(x, u)
\label{eq:q_learning}
\end{equation}
with $Q_\theta$ approximating the optimal $Q_\star$ defined by the Bellman optimality equation
$Q_{\star}(x,u)=\sum_{x'}P_{xx'}^u\!\left[R_{xx'}^u+\gamma\max_{u'}Q_{\star}(x',u')\right].$

\subsection{Safety Filter}

We consider a discrete-time system with safety constraints, where the system evolves at discrete decision steps indexed by $k \in \mathbb{N}$. The dynamics are given by
$
x_{k+1} = f(x_k, u_k),
$
where $x_k \in \mathbb{R}^n$ denotes the state at time step $k$ and $u_k \in \mathbb{R}^m$ denotes the control input. Safety is specified by a continuously differentiable function $h:\mathbb{R}^n \to \mathbb{R}$, with the safe region defined as
$\mathcal{X}_{\mathrm{safe}} := \{\, x \mid h(x) \ge 0 \,\}.$
To ensure that the closed-loop system remains within $\mathcal{X}_{\mathrm{safe}}$, we restrict the control input to a state-dependent admissible set
$u_k \in \mathcal{U}(x_k),\forall k,$
where $\mathcal{U}(x)$ contains all inputs that keep the successor state inside the safe region:
$\mathcal{U}(x)
= \{\, u \in \mathbb{R}^m \mid h(f(x,u)) \ge 0 \,\}.$

If the controller always selects $u_k \in \mathcal{U}(x_k)$, then the safe set $\mathcal{X}_{\mathrm{safe}}$ is forward-invariant, meaning that $x_0 \in \mathcal{X}_{\mathrm{safe}}$ implies $x_k \in \mathcal{X}_{\mathrm{safe}}$ for all future time steps. In practice, the admissible set $\mathcal{U}(x)$ may be computed explicitly using reachability tools or implicitly enforced through control barrier functions or predictive safety filters\cite{miliosic2024stability}. 

Given a known safety set $\mathcal{U}$, we can project nominal actions into the safe region:
\begin{equation}
\begin{aligned}
u_I = \arg\min_{u} \quad & \frac{1}{2} \left\| u - u_{\text{ref}} \right\|^2 \\
\text{s.t.} \quad & u \in \mathcal{U}(x)
\end{aligned}
\label{eq:euclidean_projection}
\end{equation}
This yields the closest safe action in Euclidean norm.

\subsection{Safe Q-learning via Projection}

The “Safe Q-learning via Projection“ method~\cite{gros2020safe} incorporates safety directly into Q-learning by restricting all maximizations of the Q-function to the admissible set $\mathcal{U}(x)$. The safety-constrained optimal action is thus
\begin{equation}
u^\star=\arg\max_{u\in\mathcal{U}(x)}Q(x,u).
\label{eq:safety_constraint_optimization}
\end{equation}
Since both exploration and policy improvement occur within $\mathcal{U}(x)$, every learning update and executed action remains safe. This guarantees forward invariance of $\mathcal{X}_{\mathrm{safe}}$ while optimizing long-term performance over the admissible set.

\section{State-dependent Projection Matrix}
\subsection{Motivation: Beyond Euclidean Projection}
In Safe Reinforcement Learning, once training has converged, the safety filter modifies the reinforcement learning policy by generating an unconstrained optimal action $u_{\text{ref}}$ according to equation~\eqref{eq:q_learning}, and then obtaining the safe action $u_I$ according to equation~\eqref{eq:euclidean_projection}. However, this method only minimizes the Euclidean distance and does not account for the effect on long-term performance. The "Safe Q-learning via Projection" method~\cite{gros2020safe} provides the safe optimal action $u^*$ according to equation~\eqref{eq:safety_constraint_optimization}, using the action-value function as the objective. This can theoretically yield the optimal policy, but the resulting problem is generally nonconvex due to the shape of $Q$ and the structure of $\mathcal{U}(x)$. Solving
~\eqref{eq:safety_constraint_optimization} often requires global search or repeated local optimizations, leading to significant computational overhead and limiting real-time applicability.

\subsection{Hessian-Weighted Projection Filter}
In this paper, we introduce a state-dependent projection matrix \( W(x) \) and modify the optimization objective of the safety filter as follows:

\begin{equation}
\begin{aligned}
u_W = \arg\min_{u} \quad & \| u - u_{\text{ref}} \|_{W(x)}^2 \\
\text{s.t.} \quad & u \in \mathcal{U}(x),
\end{aligned}
\end{equation}
where $u_{\text{ref}}$ is the unconstrained optimal action according to equation~\eqref{eq:q_learning} , \( \| v \|_{W(x)}^2 = v^\top W(x) v \), and the state-dependent projection matrix is chosen as
\begin{equation}
W(x) = -\nabla_u^2 Q(x, u_{\text{ref}}),
\end{equation}
using the Hessian of the action-value function at the reference action to construct the projection matrix.
This choice preserves the convexity of the projection problem while biasing the correction toward action directions that are more critical for long-term performance.

\subsection{Safety Guarantee: Forward Invariance}
We first show that replacing the Euclidean metric with the state-dependent weighting matrix $W(x)$ does not affect the safety guarantees of the original safety filter.

Recall that the safe set is defined as
$\mathcal{X}_{\mathrm{safe}}:=\{x\mid h(x)\ge 0\}$,
and the admissible input set is given by
\[
\mathcal{U}(x):=\{u\in\mathbb{R}^m \mid h(f(x,u))\ge 0\}.
\]
By definition, if the applied control input satisfies $u_k\in\mathcal{U}(x_k)$, then the successor state
$x_{k+1}=f(x_k,u_k)$ satisfies $x_{k+1}\in\mathcal{X}_{\mathrm{safe}}$.

At each time step $k$, the proposed safety filter computes the control input as
\(
u_k := u_W(x_k)=\arg\min_{u}\ \|u-u_{\mathrm{ref}}(x_k)\|^2_{W(x_k)}
\quad\text{s.t.}\quad u\in\mathcal{U}(x_k).
\)
Since $u_W(x_k)$ is a feasible solution of the above constrained optimization problem, it necessarily satisfies
$u_k\in\mathcal{U}(x_k)$.
Therefore,
\[
h(f(x_k,u_k))\ge 0
\quad\Longrightarrow\quad
x_{k+1}\in\mathcal{X}_{\mathrm{safe}}.
\]
Starting from any initial condition $x_0\in\mathcal{X}_{\mathrm{safe}}$, an induction argument shows that
$x_k\in\mathcal{X}_{\mathrm{safe}}$ for all $k\ge 0$.
Hence, the forward invariance of the safe set is preserved.

Importantly, this argument relies only on the constraint set $\mathcal{U}(x)$ and not on the specific choice of the objective function.
Consequently, even when $W(x)$ is state-dependent or learned from data, the proposed Hessian-weighted projection inherits the same safety guarantees as the original safety filter.

\subsection{Performance Rationale: Single-Step Correction with Long-Term Value Sensitivity}
Having established that the weighted projection preserves safety, we now clarify why a single-step correction can still be associated with long-term performance.

In reinforcement learning, the action-value function $Q(x,u)$ represents the expected discounted cumulative return starting from state $x$ and taking action $u$.
As such, $Q$ is inherently a long-term performance metric.
Standard reinforcement learning policies are executed in a single-step fashion, e.g., by selecting
$u_{\mathrm{ref}}(x)=\arg\max_u Q(x,u)$
at each time step, yet they optimize long-term performance precisely because each step is evaluated through the action-value function.

In the presence of safety constraints, the reference action $u_{\mathrm{ref}}$ may be infeasible and must be corrected to lie in the admissible set $\mathcal{U}(x)$.
A Euclidean projection performs this correction purely based on geometric proximity and ignores how deviations in different action directions affect the long-term return.
In contrast, the proposed method incorporates curvature information of $Q$ through the weighting matrix $W(x)$, so that deviations along directions with higher long-term value sensitivity are penalized more strongly.

\subsubsection{Exact Equivalence Under Quadratic $Q$}
The following idealized example formalizes this interpretation by considering a setting in which the action-value function admits an exact quadratic representation.

\begin{lemma}[Exact Long-Term Optimality under Quadratic $Q$]
\label{lem:quadratic_equiv}
Fix a state $x$ and suppose that the action-value function admits an exact strictly concave quadratic form
\[
Q(x,u)=Q(x,u_{\mathrm{ref}})-\tfrac12 (u-u_{\mathrm{ref}})^\top W (u-u_{\mathrm{ref}}),
\]
with $W\succ 0$, so that $u_{\mathrm{ref}}$ is the unconstrained Bellman-optimal action at $x$.
Then the solution of the safe Bellman optimization problem
\[
u^\star = \arg\max_{u\in\mathcal{U}(x)} Q(x,u)
\]
coincides with the output of the Hessian-weighted safety filter
\[
u_W = \arg\min_{u\in\mathcal{U}(x)} \|u-u_{\mathrm{ref}}\|_W^2.
\]
Hence, in this quadratic setting, the proposed single-step safety filter exactly recovers the long-term optimal safe action.
\end{lemma}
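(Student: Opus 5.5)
The plan is to show that the two optimization problems have the same objective up to a strictly decreasing affine transformation, and hence the same minimizer/maximizer set over the common feasible set $\mathcal{U}(x)$.

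\textbf{Step 1: reduce the objective.} Since $Q(x,u_{\mathrm{ref}})$ is a constant independent of $u$, maximizing $Q(x,u)$ over $\mathcal{U}(x)$ is equivalent to maximizing $-\tfrac12\|u-u_{\mathrm{ref}}\|_W^2$. That, in turn, is equivalent to minimizing $\|u-u_{\mathrm{ref}}\|_W^2$. Concretely, for any $u,v\in\mathcal{U}(x)$ we have
\[
Q(x,u)\ge Q(x,v)\iff \|u-u_{\mathrm{ref}}\|_W^2\le \|v-u_{\mathrm{ref}}\|_W^2 .
\]
So the argmax set of the safe Bellman problem equals the argmin set of the weighted projection.

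\textbf{Step 2: consistency with the filter's metric.} Differentiating the quadratic form twice gives $\nabla_u^2 Q(x,u)\equiv -W$. Therefore $W(x)=-\nabla_u^2Q(x,u_{\mathrm{ref}})=W$, and the filter defined in the previous subsection uses exactly this metric. The gradient at $u_{\mathrm{ref}}$ vanishes and $W\succ0$, which confirms that $u_{\mathrm{ref}}$ is the unique unconstrained maximizer. This matches the statement's interpretation.

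\textbf{Step 3: well-posedness of "the" solution.} This is the main point needing care, since the lemma speaks of $u^\star$ and $u_W$ as single points.

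Because $W\succ0$, the map $u\mapsto\|u-u_{\mathrm{ref}}\|_W^2$ is coercive and strictly convex. Existence of a minimizer follows from coercivity whenever $\mathcal{U}(x)$ is nonempty and closed. Closedness holds under continuity of $h\circ f(x,\cdot)$, which follows from the standing continuity of $h$ and assuming $f$ is continuous in $u$.

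Uniqueness requires $\mathcal{U}(x)$ to be convex, as implicitly assumed by the paper's "convex projection". If $\mathcal{U}(x)$ is not convex, I will phrase the conclusion as equality of solution sets: $\arg\max_{\mathcal{U}(x)}Q=\arg\min_{\mathcal{U}(x)}\|\cdot-u_{\mathrm{ref}}\|_W^2$. Any selection rule applied identically to both sets then yields $u^\star=u_W$.

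With these caveats stated, the equivalence from Step 1 finishes the proof.
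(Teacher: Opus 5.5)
Your proof is correct and follows essentially the same route as the paper. Both arguments identify $\nabla_u^2 Q(x,u_{\mathrm{ref}})=-W$ (the paper phrases this as the second-order Taylor expansion being exact) and then use the fact that $Q(x,u_{\mathrm{ref}})$ is constant to equate maximizing $Q(x,\cdot)$ over $\mathcal{U}(x)$ with minimizing $\|u-u_{\mathrm{ref}}\|_W^2$ over $\mathcal{U}(x)$; your Step 3 on existence (closedness plus coercivity) and uniqueness (convexity of $\mathcal{U}(x)$, or otherwise equality of solution sets) is a useful refinement that the paper leaves implicit.
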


\begin{proof}
Fix a state $x$. By assumption, $Q(x,u)$ is the exact Bellman action-value function and is strictly concave quadratic with respect to $u$, so that $u_{\mathrm{ref}}$ is the unique unconstrained maximizer of $Q(x,\cdot)$.
Consequently, the first-order optimality condition
$\nabla_u Q(x,u_{\mathrm{ref}})=0$
holds, and the second-order Taylor expansion of $Q(x,u)$ around $u_{\mathrm{ref}}$ is exact.

It follows that for any $u\in\mathcal{U}(x)$,
\[
Q(x,u)
= Q(x,u_{\mathrm{ref}})
+ \tfrac12 (u-u_{\mathrm{ref}})^\top \nabla_u^2 Q(x,u_{\mathrm{ref}})(u-u_{\mathrm{ref}}).
\]
Since $Q(x,u)$ is strictly concave, $\nabla_u^2 Q(x,u_{\mathrm{ref}})\prec 0$.
Defining the weighting matrix $W:=-\nabla_u^2 Q(x,u_{\mathrm{ref}})\succ 0$, we obtain
\[
Q(x,u)
= Q(x,u_{\mathrm{ref}}) - \tfrac12 \|u-u_{\mathrm{ref}}\|_W^2.
\]

Because $Q(x,u_{\mathrm{ref}})$ is constant with respect to $u$, maximizing the Bellman objective $Q(x,u)$ over the admissible set $\mathcal{U}(x)$ is equivalent to minimizing the weighted deviation $\|u-u_{\mathrm{ref}}\|_W^2$ over $\mathcal{U}(x)$.
Therefore, the solution of the constrained Bellman optimization problem coincides with the Hessian-weighted projection, i.e., $u^\star=u_W$.
\end{proof}

This result shows that when $Q$ is locally quadratic, the proposed weighted projection is not an approximation but yields the exact long-term optimal safe action.
In more general settings, deviations from this ideal case are governed by higher-order terms, whose effects are explicitly controlled in the following analysis through Assumptions~A1--A2.

\subsection{Second-Order Analysis Framework}
To analyze the above method, we introduce the following assumptions:

\begin{itemize}
    \item \textbf{A1. Lipschitz Continuity of the Hessian Matrix} \\
    Suppose that the Hessian matrix of the action value function \( Q(x, u) \in C^2 \) with respect to the action \( u \) is Lipschitz continuous with some Lipschitz constant \( 0 < L_2 < +\infty \), i.e., for any \( u, v \in \mathbb{R}^m \),
    \[
    \|\nabla_u^2 Q(x, u) - \nabla_u^2 Q(x, v)\| \leq L_2 \|u - v\|,
    \]
    where \( \|\cdot\| \) denotes a matrix norm (such as spectral or Frobenius norm).
    
    \item \textbf{A2. Local Negative Definiteness of the Hessian Matrix} \\
    There exists a compact set \( \mathcal{S} \subset \mathbb{R}^m \) such that the weighted action \( u_W \), 
    the Euclidean-projected action \( u_I \), and the safe optimal action \( u^* \) all lie in this set, i.e.,
    \[
    \{u_W, u^*, u_I\} \subseteq \mathcal{S} \overset{\text{def}}{=} \{\, u : \|u - u_{\text{ref}}\|_{W} \le D \,\}.
    \]
    Moreover, there exists \( \mu>0 \) such that \( \nabla_u^2 Q(x,u) \preceq -\mu^2 I \) for all \( u \in \mathcal{S} \),
    meaning that \( Q(x, u) \) is strictly concave with respect to \( u \) within the set \( \mathcal{S} \).
\end{itemize}

\subsubsection{Near-Optimality Bound of the Weighted Projection}

Assumptions A1--A2 are regularity conditions enabling second-order analysis. Note that these are not automatically satisfied by arbitrary learned Q-functions; rather, they are assumptions we impose to enable rigorous analysis and performance bounds. In Section IV, we introduce a data-driven construction method with regularization specifically designed to enforce these conditions, ensuring that the learned Q-function satisfies A1--A2 in practice. Based on these assumptions, we analyze the performance gap using Taylor expansion of the Q-function.

\begin{theorem}
Under Assumptions A1--A2, there exists a constant $C:=L_2 D^3/(3\mu^3)>0$ such that
\[
|Q(x,u_W)-Q(x,u^\ast)| \le C.
\]
\end{theorem}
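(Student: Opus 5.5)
We will compare $Q$ against its second-order model. Fix $x$, write $W=-\nabla_u^2 Q(x,u_{\mathrm{ref}})$, and let $\Delta=u-u_{\mathrm{ref}}$. Since $u_{\mathrm{ref}}$ is the unconstrained maximizer, $\nabla_u Q(x,u_{\mathrm{ref}})=0$. Taylor's theorem with integral remainder then gives
\[
Q(x,u)=Q(x,u_{\mathrm{ref}})-\tfrac12\|\Delta\|_W^2+R(u),
\]
where
\[
R(u)=\int_0^1(1-t)\,\Delta^\top\bigl[\nabla_u^2Q(x,u_{\mathrm{ref}}+t\Delta)-\nabla_u^2Q(x,u_{\mathrm{ref}})\bigr]\Delta\,dt .
\]
By A1, $|R(u)|\le L_2\|\Delta\|^3\int_0^1 t(1-t)\,dt=\tfrac{L_2}{6}\|\Delta\|^3$. (The A1 estimate is needed along the segment from $u_{\mathrm{ref}}$ to $u$, and A1 holds globally, so this is fine.)

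Next we convert the Euclidean norm into the $W$-norm. The set $\mathcal{S}$ contains $u_{\mathrm{ref}}$, so A2 gives $W\succeq\mu^2 I$. Hence $\|\Delta\|\le\|\Delta\|_W/\mu$. For every $u\in\mathcal{S}$ this yields $|R(u)|\le \frac{L_2}{6\mu^3}\|\Delta\|_W^3\le\frac{L_2D^3}{6\mu^3}$.

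Now we compare $u_W$ and $u^\ast$, both of which lie in $\mathcal{S}$ by A2. Since $u^\ast$ maximizes $Q$ over $\mathcal{U}(x)$ and $u_W\in\mathcal{U}(x)$, we have $Q(x,u^\ast)-Q(x,u_W)\ge0$, so only an upper bound is needed. Using the expansion at both points,
\[
Q(x,u^\ast)-Q(x,u_W)=\tfrac12\bigl(\|u_W-u_{\mathrm{ref}}\|_W^2-\|u^\ast-u_{\mathrm{ref}}\|_W^2\bigr)+R(u^\ast)-R(u_W).
\]
The bracket is $\le0$ because $u_W$ minimizes the $W$-distance over $\mathcal{U}(x)$ and $u^\ast\in\mathcal{U}(x)$. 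Therefore the gap is at most $|R(u^\ast)|+|R(u_W)|\le 2\cdot\frac{L_2D^3}{6\mu^3}=\frac{L_2D^3}{3\mu^3}=C$, which is exactly the claimed constant.

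The main delicacy is the norm conversion. It requires that $W$ itself inherit the $\mu^2$ lower bound, which follows because $u_{\mathrm{ref}}\in\mathcal{S}$. It also requires using the integral form of the remainder: the cruder bound $L_2\|\Delta\|^3/2$ from the mean-value form would not give the factor $1/3$. We will also state that the bound on the spectral norm in A1 is what controls the quadratic form $\Delta^\top(\cdot)\Delta$. If a Frobenius norm is used in A1, this still holds, since the Frobenius norm dominates the spectral norm.
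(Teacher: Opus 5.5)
Your proof is correct and follows the paper's argument essentially step for step: a second-order Taylor expansion about $u_{\mathrm{ref}}$ with the $\tfrac{L_2}{6}\|u-u_{\mathrm{ref}}\|^3$ remainder, nonpositivity of the quadratic difference by the optimality of $u_W$ in the $W$-norm, and the conversion $\|u-u_{\mathrm{ref}}\|\le D/\mu$ from $W\succeq\mu^2 I$. You also state explicitly two points the paper leaves implicit, namely that $Q(x,u^\ast)\ge Q(x,u_W)$ (needed to pass to the absolute value) and that $W\succeq\mu^2 I$ holds because $u_{\mathrm{ref}}\in\mathcal{S}$, and you derive the remainder bound from the integral form instead of citing it.
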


\begin{proof}
Perform a second-order Taylor expansion of $Q(x,u)$ about $u_{\mathrm{ref}}$ (noting $\nabla_u Q(x,u_{\mathrm{ref}})=0$):
\[
Q(x,u)=Q(x,u_{\mathrm{ref}})+\tfrac{1}{2}(u-u_{\mathrm{ref}})^\top H (u-u_{\mathrm{ref}})+R_3(u),
\]
where $H:=\nabla_u^2 Q(x,u_{\mathrm{ref}})$. 
By Assumption~A1, $\nabla_u^2 Q(x,\cdot)$ is $L_2$-Lipschitz, so the standard Taylor remainder estimate for functions with Lipschitz continuous Hessian (see, e.g.,~\cite{nesterov2004introductory}) yields
\(
|R_3(u)| \le \tfrac{L_2}{6}\|u-u_{\mathrm{ref}}\|^3.
\)
Taking the difference between the two actions gives
$
\begin{aligned}
Q(x,u^*)-Q(x,u_W)
&= \tfrac{1}{2}\Big[(u^*-u_{\mathrm{ref}})^\top H (u^*-u_{\mathrm{ref}})\\
&\quad-(u_W-u_{\mathrm{ref}})^\top H (u_W-u_{\mathrm{ref}})\Big]\\
&\quad + R_3(u^*)-R_3(u_W).
\end{aligned}
$
By construction $W(x)=-H$ and $u_W$ minimizes the quadratic form induced by $W$, hence
\[
(u_W-u_{\mathrm{ref}})^\top W (u_W-u_{\mathrm{ref}})
\le (u^*-u_{\mathrm{ref}})^\top W (u^*-u_{\mathrm{ref}}).
\]
Substituting $W=-H$ yields
\[
(u^*-u_{\mathrm{ref}})^\top H (u^*-u_{\mathrm{ref}})
\le (u_W-u_{\mathrm{ref}})^\top H (u_W-u_{\mathrm{ref}}),
\]
so the quadratic difference is nonpositive. Therefore
\[
\begin{aligned}
|Q(x,u_W)-Q(x,u^*)| &\le |R_3(u^*)-R_3(u_W)| \\
&\le |R_3(u^*)|+|R_3(u_W)|.
\end{aligned}
\]
Using Assumption A2, since $u^*, u_W = \mathcal{S}$ and $\nabla_u^2 Q \preceq -\mu^2 I$ on $\mathcal{S}$, we have $\mu^2 I \preceq W(x) = -\nabla_u^2 Q(x,u_{\mathrm{ref}})$. 
Thus, $\|u-u_{\mathrm{ref}}\|_W^2 \ge \mu^2 \|u-u_{\mathrm{ref}}\|^2$ for any $u$. 
Since $u^*, u_W = \mathcal{S}$ implies $\|u^*-u_{\mathrm{ref}}\|_W, \|u_W-u_{\mathrm{ref}}\|_W \le D$, we obtain
$\|u^*-u_{\mathrm{ref}}\|,\ \|u_W-u_{\mathrm{ref}}\|\le D/\mu$. Thus,
\[
\begin{aligned}
|Q(x,u_W)-Q(x,u^*)|
&\le \tfrac{L_2}{6}\Big(\|u^*-u_{\mathrm{ref}}\|^3+\|u_W-u_{\mathrm{ref}}\|^3\Big)\\
&\le \tfrac{L_2}{6}\cdot 2\left(\frac{D}{\mu}\right)^3
= \frac{L_2 D^3}{3\mu^3}.
\end{aligned}
\]
Setting $C:=L_2 D^3/(3\mu^3)$ completes the proof.
\end{proof}

\subsubsection{When Does Hessian-Weighted Projection Outperform Euclidean Projection}

While the previous theorem establishes that the weighted project action achieves bounded performance loss, a natural question is when the weighted projection actually outperforms the standard Euclidean projection. Our second result gives a sufficient condition: the Hessian-weighted projection outperforms Euclidean projection when the action-value function exhibits strong local curvature (large \(\mu\)) relative to higher-order variations (small \(L_2\)), and when the safety constraint forces significant deviation from \(u_{\mathrm{ref}}\).

\begin{theorem}
    Under the assumptions A1--A2, let \( \delta = \|u_W - u_{\text{ref}}\|_W > 0 \) denote the weighted projection distance, 
    and let \( \beta > 1 \) be the ratio between the Euclidean and weighted projection distances such that
    $\beta \| u_W - u_{\text{ref}} \|_W = \| u_I - u_{\text{ref}} \|_W$
    with 
    \( c = (\beta^2 - 1)/2 \). 
    If the following condition holds:
    \(    \frac{\mu^3}{L_2} \ge \frac{D^3}{3c\,\delta^2},
    \)
    then it is guaranteed that
    \(
    Q(x, u_W) \ge Q(x, u_I).
    \)
\end{theorem}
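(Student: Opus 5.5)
We follow the template of the near-optimality theorem above: expand $Q(x,\cdot)$ to second order about $u_{\mathrm{ref}}$, using $\nabla_u Q(x,u_{\mathrm{ref}})=0$ and $H=\nabla_u^2Q(x,u_{\mathrm{ref}})=-W$. This gives, for any $u$,
\[
Q(x,u)=Q(x,u_{\mathrm{ref}})-\tfrac12\|u-u_{\mathrm{ref}}\|_W^2+R_3(u),\qquad |R_3(u)|\le \tfrac{L_2}{6}\|u-u_{\mathrm{ref}}\|^3 ,
\]
where the remainder bound is the Lipschitz-Hessian estimate from Assumption~A1. We apply this expansion to $u_W$ and to $u_I$ and subtract. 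The constant term cancels. The quadratic parts give
\[
\tfrac12\|u_I-u_{\mathrm{ref}}\|_W^2-\tfrac12\|u_W-u_{\mathrm{ref}}\|_W^2=\tfrac12(\beta^2-1)\delta^2=c\,\delta^2 ,
\]
by the definition of $\beta$ and $\delta$. Hence
\[
Q(x,u_W)-Q(x,u_I)\ \ge\ c\,\delta^2-|R_3(u_W)|-|R_3(u_I)| .
\]

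Next we bound the remainders. By Assumption~A2, both $u_W$ and $u_I$ lie in $\mathcal{S}$, so their $W$-distance to $u_{\mathrm{ref}}$ is at most $D$. We also have $W\succeq\mu^2 I$: this follows because $u_{\mathrm{ref}}\in\mathcal{S}$ trivially and $\nabla_u^2Q\preceq-\mu^2I$ on $\mathcal{S}$. Therefore $\|u-u_{\mathrm{ref}}\|\le D/\mu$ for $u\in\{u_W,u_I\}$. Exactly as in the previous theorem, this gives
\[
|R_3(u_W)|+|R_3(u_I)|\le \tfrac{L_2}{6}\cdot 2\,(D/\mu)^3=\tfrac{L_2D^3}{3\mu^3}.
\]

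Finally we invoke the hypothesis. Rearranging $\mu^3/L_2\ge D^3/(3c\delta^2)$ gives $c\,\delta^2\ge L_2D^3/(3\mu^3)$; this step uses $c>0$, which holds since $\beta>1$, and $\delta>0$. Substituting yields $Q(x,u_W)-Q(x,u_I)\ge 0$.

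The calculation is essentially routine. The one point needing care is the role of $\beta$: the quadratic gain is measured in the $W$-norm, and $\beta$ is defined through $W$-norms, so the identity $c\,\delta^2$ is exact and no norm conversion is needed there. The conversion via $\mu$ is needed only for the cubic remainders. One could sharpen the bound using $\|u_W-u_{\mathrm{ref}}\|_W=\delta\le D$ and $\|u_I-u_{\mathrm{ref}}\|_W=\beta\delta\le D$ separately. However, the crude uniform bound $D$ already matches the stated condition, so I would use it to keep the argument aligned with the stated constant.
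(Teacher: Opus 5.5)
Your proposal is correct and follows the paper's proof essentially step for step. You use the same second-order expansion about $u_{\mathrm{ref}}$ with the Lipschitz-Hessian remainder, the exact quadratic gap $c\,\delta^2$ coming from the $W$-norm definition of $\beta$, the uniform remainder bound $L_2D^3/(3\mu^3)$ via $W\succeq\mu^2 I$ and $\|u-u_{\mathrm{ref}}\|\le D/\mu$, and then the rearranged hypothesis. You also write the remainder difference with the correct sign ($+R_3(u_W)-R_3(u_I)$), which the paper's displayed decomposition gets backwards; this is harmless there, since only absolute values are used.
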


\begin{proof}
Expanding $Q(x,u)$ around $u_{\mathrm{ref}}$ using the second-order Taylor expansion gives
where, by Assumption~A1 the standard Taylor remainder estimate,
\[
|R_3(u)| \le \tfrac{L_2}{6}\|u - u_{\mathrm{ref}}\|^3, \quad \forall u \in \mathcal{S}.
\]

Define the quadratic approximation
\[
Q_2(x,u) := Q(x,u_{\mathrm{ref}}) + \tfrac{1}{2}(u - u_{\mathrm{ref}})^\top \nabla_u^2 Q(x,u_{\mathrm{ref}})(u - u_{\mathrm{ref}}),
\]
which can be rewritten as
\(
Q_2(x,u) = Q(x,u_{\mathrm{ref}}) - \tfrac{1}{2}\|u - u_{\mathrm{ref}}\|^2_{W(x)},
\)
with $W(x) := -\nabla_u^2 Q(x,u_{\mathrm{ref}}) \succ 0$.

Since $u_W$ is the minimizer in the weighted norm, it follows that
\(
\|u_W - u_{\mathrm{ref}}\|^2_{W(x)} \le \|u_I - u_{\mathrm{ref}}\|^2_{W(x)}.
\)
By Assumption~A2, this relation can be parameterized as
\(
\|u_I - u_{\mathrm{ref}}\|_W = \beta \|u_W - u_{\mathrm{ref}}\|_W, 
\quad \|u_W - u_{\mathrm{ref}}\|_W = \delta,
\)
with $\beta > 1$. Therefore,
\(
Q_2(x,u_W) - Q_2(x,u_I) = \tfrac{1}{2}(\beta^2-1)\|u_W - u_{\mathrm{ref}}\|^2_W = c \, \delta^2,
\)
where $c = (\beta^2-1)/2$.

For the Taylor remainders, Assumption~A2 yields
\(
|R_3(u_I) - R_3(u_W)| \le  \tfrac{L_2}{6}\Big(\|u_I - u_{\mathrm{ref}}\|^3 + \|u_W - u_{\mathrm{ref}}\|^3\Big) \\
\le  \frac{L_2 D^3}{3\mu^3}.
\)

Thus, the Q-value difference can be bounded as
\(
Q(x,u_W) - Q(x,u_I) = \big[ Q_2(x,u_W) - Q_2(x,u_I) \big] \\
 - \big[ R_3(u_W) - R_3(u_I) \big]
 \geq c \, \delta^2 - \frac{L_2 D^3}{3 \mu^3}
\)

and whenever
$c \, \delta^2 \ge \tfrac{L_2 D^3}{3\mu^3}
\Longleftrightarrow
\delta \ge \sqrt{\tfrac{L_2 D^3}{3 c \mu^3}},$
we obtain
$Q(x,u_W) \ge Q(x,u_I).$
\end{proof}

\section{Data-Driven Construction of Weighting Matrices}
In practical systems, the action-value function of the controller is typically unknown. As a result, the projection matrix $W(x)$ cannot be derived analytically and must instead be inferred from data. We therefore rely on an observable dataset of state–action–reward–next-state tuples
\begin{equation}
\mathcal{D} = \{(x_i, u_i, r_i, x_i')\}_{i=1}^N,
\end{equation}
where $N$ is the number of samples, $x_i$ and $u_i$ are the state and action at sample $i$, $r_i$ is the immediate reward, and $x_i'$ is the next state. Our objective is to obtain, in an offline manner, an approximately optimal action-value function $Q_{\theta}(x,u)$ and to extract from its Hessian a state-dependent weighting matrix $W(x)$ that replaces the identity weighting traditionally used in safety filters.

To balance interpretability and computational tractability, we adopt a linear function approximation
\begin{equation}
Q_\theta(x,u) = \theta^\top \psi(x,u),
\end{equation}
where $\psi(x,u)$ is a manually designed, twice-differentiable feature mapping and $\theta$ is the parameter vector to be learned. The mapping $\psi(x,u)$ is decomposed as
\begin{equation}
\psi(x, u) =
\begin{bmatrix}
\psi_c(x) \\
\psi_b(x) u \\
\psi_s(x) \otimes \mathrm{vech}\!\left(\tfrac{1}{2} uu^\top\right) \\
\psi_{\mathrm{nl}}(x,u)
\end{bmatrix},
\quad
\theta =
\begin{bmatrix}
\theta_c \\
\theta_b \\
\theta_s \\
\theta_{\mathrm{nl}}
\end{bmatrix}.
\end{equation}
Here, $\psi_c(x)$ represents state-dependent constant features, $\psi_b(x)u$ encodes the linear dependence on the control input, the quadratic block $\psi_s(x)\otimes\mathrm{vech}(\tfrac{1}{2}uu^\top)$ introduces a structured curvature representation in $u$ that is key for analytic Hessian extraction, and $\psi_{\mathrm{nl}}(x,u)$ denotes a smooth nonlinear residual used to capture higher-order variations. This decomposition allows $Q_\theta(x,u)$ to retain a clear curvature structure while remaining flexible and trainable from data.

Let $\{B_j\}_{j=1}^{m(m+1)/2} \subset \mathbb{S}^m$ (where $m$ is the dimension of the action space) be a symmetric basis such that
$\left[\mathrm{vech}\!\left(\tfrac{1}{2} uu^\top\right)\right]_j = \tfrac{1}{2} u^\top B_j u,$
where $\mathbb{S}^m$ denotes the space of $m \times m$ symmetric matrices, and $\{B_j\}$ is any fixed symmetric matrix basis chosen so that each index $j$ corresponds to one unique quadratic monomial in $u$. Using the fact that
$\nabla_u^2 \!\left(\tfrac{1}{2} u^\top B_j u \right) = B_j,$
the Hessian of $Q_\theta$ with respect to the action is given by
\begin{equation}
\begin{aligned}
\nabla_u^2 Q_\theta(x,u)
=&-\sum_{a=1}^{p_s} \sum_{j=1}^{m(m+1)/2} \theta_{s,aj} \psi_{s,a}(x) B_j \\
&+\sum_{r=1}^{p_{\mathrm{nl}}} \theta_{\mathrm{nl},r} \nabla_u^2 \psi_{\mathrm{nl},r}(x,u),
\end{aligned}
\end{equation}
where $p_s$ and $p_{\mathrm{nl}}$ denote the dimensions of the quadratic and nonlinear feature blocks, respectively, $\psi_{s,a}(x)$ is the $a$-th component of $\psi_s(x)$, and $\theta_{s,aj}$ is the coefficient associated with the pair $(\psi_{s,a}, B_j)$. The first summation collects the curvature contribution of the structured quadratic block, while the second term accounts for the curvature induced by the nonlinear block.

During training, we follow the Approximate Optimal Policy Evaluation paradigm: without relying on a system dynamics model, we reconstruct the $Q$ function from trajectories collected under the nominal controller.  
At iteration $k$, using the current parameter $\theta^{(k)}$, the supervised targets for all samples are computed as:
\begin{equation}
y_i^{(k)} = r_i + \gamma \max_{u'} \big[{\theta^{(k)}}^\top \psi(x_i',u')\big],
\end{equation}
where $\gamma \in (0,1)$ is the discount factor. The maximization can be solved analytically when $u'$ is continuous and $\psi$ has a simple structure, or by enumeration if $u'$ is discrete.  

In the Fitted Q-Iteration least-squares regression step, we introduce regularization terms and update the parameters as:
\begin{equation}
\begin{aligned}
\theta^{(k+1)} = \arg\min_{\theta} &\sum_{i=1}^N \left( y_i^{(k)} - \theta^\top \psi(x_i, u_i) \right)^2\\ &+ \alpha \mathcal{R}_\mu(\theta) + \beta \mathcal{R}_{L_2}(\theta),
\end{aligned}
\end{equation}
where $\alpha, \beta > 0$ are regularization weights. The term $\mathcal{R}_\mu(\theta)$ is constructed from the quadratic basis components to promote the dominance of the second-order terms, with the quadratic coefficient matrix defined as:
\begin{equation}
S_\theta(x)=\sum_{a=1}^{p_s}\sum_{j=1}^{m(m+1)/2}\theta_{s,aj}\psi_{s,a}(x)B_j,
\end{equation}
and computed on a state grid $\mathcal{X}_{\mathrm{grid}}$ as:
\begin{equation}
\mathcal{R}_\mu(\theta)=\sum_{x\in\mathcal{X}_{\mathrm{grid}}}\big\|(-S_\theta(x))_+\big\|_F^2-\tau\log\det(S_\theta(x)+\varepsilon I),
\end{equation}
where $\mathcal{X}_{\mathrm{grid}}$ is a discretized grid of representative states, $(\cdot)_+$ denotes the positive part operator, $\|\cdot\|_F$ is the Frobenius norm, $\tau > 0$ is the logarithmic barrier weight, and $\varepsilon > 0$ is a numerical stability constant. This regularizer penalizes negative eigenvalues and incorporates a logarithmic barrier to maintain a positive spectral lower bound.

The regularization term $\mathcal{R}_{L_2}(\theta)$ controls the higher-order curvature introduced by 
the nonlinear basis functions. 
Since the Hessian Lipschitz condition in Assumption~A1 requires the third-order derivatives of $Q$ 
to be uniformly bounded, we explicitly bound the third-order derivative of each nonlinear feature as:

\begin{equation}
L_{2,r}(x)=\sup_{\|u-u_{\mathrm{ref}}(x)\|\le D}\|\nabla_u^3\psi_{\mathrm{nl},r}(x,u)\|,
\end{equation}
where $D$ is the maximum projection distance from our theoretical analysis, $u_{\mathrm{ref}}(x)$ is the reference control input at state $x$, and $\|\nabla_u^3\psi_{\mathrm{nl},r}(x,u)\|$ denotes the operator norm of the third-order derivative tensor, leading to an estimate of the $L_2$ bound:
\begin{equation}
\widehat L_2(x)=\sum_{r=1}^{p_{\mathrm{nl}}}|\theta_{\mathrm{nl},r}|L_{2,r}(x),
\end{equation}
and the regularization term:
\begin{equation}
\mathcal{R}_{L_2}(\theta)=\sum_{x\in\mathcal{X}_{\mathrm{grid}}}(\widehat L_2(x))^2.
\end{equation}
The combination of these two regularizers ensures that, while approximating the optimal $Q$, the update process is biased toward parameter regimes where the feasibility condition in Theorem~2 is satisfied: $\mathcal{R}_\mu(\theta)$ shapes the quadratic coefficient matrices so that the induced weights $W(x)$ admit a uniformly large spectral lower bound $\mu$, whereas $\mathcal{R}_{L_2}(\theta)$ suppresses the contribution of the nonlinear features to higher-order derivatives and thus keeps the Hessian Lipschitz constant $L_2$ small. Since Theorem~2 guarantees $Q(x,u_W)\ge Q(x,u_I)$ whenever the projection distance $\delta$ satisfies
$\delta\ge\sqrt{{(L_2 D^3)}/{(3 c \mu^3)}},$
increasing $\mu$ and reducing $L_2$ enlarges the set of states and projection directions for which the weighted projection is provably no worse than the Euclidean projection in terms of value.

This process is repeated until $\|\theta^{(k+1)} - \theta^{(k)}\|$ converges below a prescribed threshold. 
once the converged parameter $\hat{\theta}$ is obtained, the state-dependent weighting matrix can be directly computed from the reference control input $u_{\mathrm{ref}}$ as: \(\hat W(x) = -\nabla_u^2 Q_{\hat{\theta}}(x,u_{\mathrm{ref}}).\)

The action-value function $Q$ is learned from data. As a result, the curvature information used to construct the Hessian-weighted projection matrix is perturbed, which in turn affects the filtered action and its true-$Q$ performance. In this subsection, we quantify how the Hessian approximation error propagates to (i) the deviation between the learned weighted-projection action and the ideal weighted projection, and (ii) the corresponding degradation of the performance bounds in Theorems~1--2.

Fix a state $x$ and denote the learned approximation by $\hat Q$. Define
\[
W(x):=-\nabla_u^2 Q(x,u_{\mathrm{ref}}),\qquad
\hat W(x):=-\nabla_u^2 Q_{\hat{\theta}}(x,u_{\mathrm{ref}}),
\]
and let
\[
u_W = \arg\min_{u\in\mathcal U(x)} \tfrac12\|u-u_{\mathrm{ref}}\|_{W(x)}^2,
\]
\[
\hat u_W = \arg\min_{u\in\mathcal U(x)} \tfrac12\|u-u_{\mathrm{ref}}\|_{\hat W(x)}^2 .
\]
Define the Hessian-induced metric error
\begin{equation}
\label{eq:def_DeltaW_noR}
\Delta W(x):=\hat W(x)-W(x).
\end{equation}
Assume that
\(
\|\Delta W(x)\|\le \rho,
0\le \rho<\mu^2,
\)
where $\mu$ is the curvature constant in Assumption~A2. Then
\(
W(x)\succeq \mu^2 I,
\hat W(x)\succeq (\mu^2-\rho)I.
\)
Moreover, as in Assumption~A2, all relevant actions lie in the compact set
\(
\mathcal S:=\{u:\|u-u_{\mathrm{ref}}\|_{W(x)}\le D\},
\)
so in particular $u_W,\hat u_W,u_I,$ and $u^\ast$ belong to $\mathcal S$. Since
$W(x)\succeq \mu^2 I$, we also have
\begin{equation}
\label{eq:euclid_bound_Dmu}
\|u-u_{\mathrm{ref}}\|\le D/\mu,\qquad \forall u\in\mathcal S.
\end{equation}

\begin{lemma}
\label{lem:u_gap_training}
Under the above error model, the learned weighted projection satisfies
\begin{equation}
\label{eq:u_gap_final_noR}
\|\hat u_W-u_W\|
\le
\sqrt{2\rho}\,\frac{D}{\mu^2}.
\end{equation}
\end{lemma}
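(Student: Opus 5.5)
The plan is a two-sided optimality comparison. Strong convexity of the ideal weighted objective will give a quadratic growth bound around $u_W$, and the metric error $\Delta W(x)$ will be controlled on the compact set $\mathcal S$ using \eqref{eq:euclid_bound_Dmu}. Write $r:=u_{\mathrm{ref}}$ and define
\[
f(u):=\tfrac12\|u-r\|_{W(x)}^2,\qquad \hat f(u):=\tfrac12\|u-r\|_{\hat W(x)}^2=f(u)+\tfrac12(u-r)^\top\Delta W(x)(u-r).
\]
Throughout I will use the convexity of $\mathcal U(x)$, which the paper implicitly relies on when it says the projection problem is convex. I expect this to be the main point needing care, because without it the quadratic growth in the next step can fail.

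\textbf{Step 1 (quadratic growth of $f$ at $u_W$).} Since $W(x)\succeq\mu^2 I$, $f$ is $\mu^2$-strongly convex. Since $u_W$ minimizes $f$ over the convex set $\mathcal U(x)$, the first-order condition $\nabla f(u_W)^\top(u-u_W)\ge 0$ holds for all $u\in\mathcal U(x)$. Combining this with strong convexity gives
\[
f(u)-f(u_W)\ge \tfrac{\mu^2}{2}\|u-u_W\|^2 .
\]
I will apply this with $u=\hat u_W$. Because $f$ is quadratic, this is in fact just the identity $f(u)-f(u_W)=\nabla f(u_W)^\top(u-u_W)+\tfrac12\|u-u_W\|_W^2$.

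\textbf{Step 2 (upper bound on the same gap).} The optimality of $\hat u_W$ for $\hat f$ gives $\hat f(\hat u_W)\le\hat f(u_W)$. Substituting $f=\hat f-\tfrac12(\cdot)^\top\Delta W(\cdot)$ then yields
\[
f(\hat u_W)-f(u_W)\le \tfrac12\big[(u_W-r)^\top\Delta W(u_W-r)-(\hat u_W-r)^\top\Delta W(\hat u_W-r)\big]
\le \tfrac{\rho}{2}\big(\|u_W-r\|^2+\|\hat u_W-r\|^2\big).
\]
Both $u_W$ and $\hat u_W$ lie in $\mathcal S$, so \eqref{eq:euclid_bound_Dmu} bounds the right-hand side by $\rho D^2/\mu^2$.

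\textbf{Step 3 (combine).} Chaining Steps 1 and 2 gives $\tfrac{\mu^2}{2}\|\hat u_W-u_W\|^2\le \rho D^2/\mu^2$. Hence $\|\hat u_W-u_W\|^2\le 2\rho D^2/\mu^4$, and taking square roots gives exactly \eqref{eq:u_gap_final_noR}.

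The assumption $\rho<\mu^2$ is used only to guarantee $\hat W(x)\succ0$. This makes $\hat u_W$ well defined as the unique minimizer of a strongly convex objective.
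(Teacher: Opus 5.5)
Your proof is correct and follows essentially the same route as the paper. Both arguments get quadratic growth of the ideal objective at $u_W$ from $\mu^2$-strong convexity. Both then bound $f(\hat u_W)-f(u_W)$ using the optimality of $\hat u_W$ for the perturbed objective together with the uniform perturbation bound $\rho D^2/(2\mu^2)$ on $\mathcal S$, and chain the two estimates.

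Your explicit remark that the quadratic-growth step needs $\mathcal U(x)$ to be convex, via the first-order optimality condition, is a correct refinement. The paper's proof relies on this only implicitly.
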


\begin{proof}
Define the two projection objectives
\[\phi(u):=\tfrac12\|u-u_{\mathrm{ref}}\|_{W(x)}^2,
\qquad
\hat\phi(u):=\tfrac12\|u-u_{\mathrm{ref}}\|_{\hat W(x)}^2 .\]

\emph{Uniform perturbation on $\mathcal S$.}
For any $u\in\mathcal S$, using $\Delta W=\hat W-W$,
\(
|\hat\phi(u)-\phi(u)|
=
\tfrac12\big|(u-u_{\mathrm{ref}})^\top \Delta W (u-u_{\mathrm{ref}})\big|
\le
\tfrac12\|\Delta W\|\,\|u-u_{\mathrm{ref}}\|^2.
\)
By $\|\Delta W\|\le\rho$ and \eqref{eq:euclid_bound_Dmu},
\begin{equation}
\label{eq:obj_perturb_noR}
|\hat\phi(u)-\phi(u)|
\le
\frac{\rho D^2}{2\mu^2},
\qquad \forall u\in\mathcal S .
\end{equation}

\emph{Strong convexity gap for $\phi$.}
Since $W(x)\succeq \mu^2 I$, the function $\phi$ is $\mu^2$-strongly convex. Hence,
for the minimizer $u_W\in\arg\min_{u\in\mathcal U(x)}\phi(u)$,
\begin{equation}
\label{eq:sc_gap_lower_noR}
\phi(u)-\phi(u_W)\ge \tfrac{\mu^2}{2}\|u-u_W\|^2,\qquad \forall u\in\mathcal U(x).
\end{equation}
In particular,
\(
\phi(\hat u_W)-\phi(u_W)\ge \tfrac{\mu^2}{2}\|\hat u_W-u_W\|^2.
\)

\emph{Upper bound via optimality of $\hat u_W$.}
By optimality of $\hat u_W$ for $\hat\phi$,
\(
\hat\phi(\hat u_W)\le \hat\phi(u_W).
\)
Therefore,
\(
\phi(\hat u_W)-\phi(u_W)
\le
\big(\phi(\hat u_W)-\hat\phi(\hat u_W)\big)
+
\big(\hat\phi(u_W)-\phi(u_W)\big).
\)
Applying \eqref{eq:obj_perturb_noR} to $u=\hat u_W$ and $u=u_W$ (both in $\mathcal S$) yields
\(
\phi(\hat u_W)-\phi(u_W)\le \frac{\rho D^2}{\mu^2}.
\)

Combining the last inequality with \eqref{eq:sc_gap_lower_noR} gives
\(
\tfrac{\mu^2}{2}\|\hat u_W-u_W\|^2 \le \frac{\rho D^2}{\mu^2},
\)
which implies \eqref{eq:u_gap_final_noR}.
\end{proof}

We next quantify how the training error in the learned action $\hat u_W$ affects
the near-optimality guarantee of the curvature-guided projection.
Specifically, we upper bound the suboptimality gap with respect to the true
optimal action $u^\ast$ by separating (i) the intrinsic approximation error of
the weighted projection method (captured by the $L_2$--$\mu$ term) and
(ii) the additional degradation caused by using a learned action $\hat u_W$
instead of the ideal optimizer $u_W$ (captured by the gradient bound $G$ and the
training-radius parameter $\rho$ via Lemma~\ref{lem:u_gap_training}).

\begin{theorem}
\label{thm:near_opt_training}
Assume Assumptions~A1--A2 hold for the true $Q$, and define
\(
G:=\sup_{u\in\mathcal S}\|\nabla_u Q(x,u)\|<\infty .
\)
Let $u^\ast=\arg\max_{u\in\mathcal S}Q(x,u)$. Then
\begin{equation}
\label{eq:near_opt_training_noR}
|Q(x,\hat u_W)-Q(x,u^\ast)|
\le
\frac{L_2D^3}{3\mu^3}
+
G\sqrt{2\rho}\,\frac{D}{\mu^2}.
\end{equation}
\end{theorem}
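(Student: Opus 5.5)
The plan is a triangle-inequality decomposition through the ideal weighted projection $u_W$:
\[
|Q(x,\hat u_W)-Q(x,u^\ast)| \le |Q(x,\hat u_W)-Q(x,u_W)| + |Q(x,u_W)-Q(x,u^\ast)|.
\]
The second term is exactly the quantity bounded in Theorem~1. Under A1--A2 it is at most $L_2D^3/(3\mu^3)$, since $u_W,u^\ast\in\mathcal S$ and the Taylor remainder argument applies verbatim. Here $u^\ast$ is the maximizer of $Q(x,\cdot)$ over the admissible actions lying in $\mathcal S$. By A2 the safe optimum lies in $\mathcal S$, so this coincides with the setting of Theorem~1. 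I will state this explicitly, because the theorem writes $\arg\max_{u\in\mathcal S}$.

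For the first term, I will use a mean-value (first-order) argument on the segment joining $\hat u_W$ and $u_W$:
\[
|Q(x,\hat u_W)-Q(x,u_W)| \le \sup_{t\in[0,1]}\|\nabla_u Q(x,u_W+t(\hat u_W-u_W))\|\,\|\hat u_W-u_W\|.
\]
Both endpoints lie in $\mathcal S$ by the standing error-model hypothesis. The set $\mathcal S$ is a sublevel set of the convex function $\|u-u_{\mathrm{ref}}\|_{W(x)}$, since $W(x)\succeq\mu^2I\succ0$, so it is convex. The segment therefore stays in $\mathcal S$, and the supremum is bounded by $G$. Lemma~\ref{lem:u_gap_training} then gives $\|\hat u_W-u_W\|\le\sqrt{2\rho}\,D/\mu^2$. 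The first term is thus at most $G\sqrt{2\rho}\,D/\mu^2$, and adding the two bounds yields \eqref{eq:near_opt_training_noR}.

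The main point requiring care is the mean-value step: I need the whole segment, not just its endpoints, to lie where the gradient bound $G$ holds. This is where convexity of $\mathcal S$ is used. A second, smaller check is that Lemma~\ref{lem:u_gap_training} applies. It needs $\hat u_W,u_W\in\mathcal S$ and $\|\Delta W\|\le\rho<\mu^2$, both of which are part of the stated error model. Everything else is a direct invocation of earlier results.
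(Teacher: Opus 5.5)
Your proposal is correct and follows the paper's proof: a triangle inequality through $u_W$, Theorem~1 for $|Q(x,u_W)-Q(x,u^\ast)|$, and the mean value theorem with the gradient bound $G$ combined with Lemma~\ref{lem:u_gap_training} for $|Q(x,\hat u_W)-Q(x,u_W)|$. Your two extra checks are also correct and make explicit points the paper leaves implicit: convexity of the ellipsoid $\mathcal S$, so the whole segment $[u_W,\hat u_W]$ is covered by $G$, and reading $u^\ast$ as the safe optimum rather than the literal $\arg\max_{u\in\mathcal S}$ (which is just $u_{\mathrm{ref}}$, a point Theorem~1 does not cover unless it is admissible).
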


\begin{proof}
By Theorem~1,
\(
|Q(x,u_W)-Q(x,u^\ast)|\le \frac{L_2D^3}{3\mu^3}.
\)
Since $Q(x,\cdot)$ is continuously differentiable on the compact set $\mathcal S$,
the mean value theorem yields
\(
|Q(x,\hat u_W)-Q(x,u_W)|\le G\|\hat u_W-u_W\|.
\)
Combining this with Lemma~\ref{lem:u_gap_training} and applying the triangle inequality
gives \eqref{eq:near_opt_training_noR}.
\end{proof}

We further compare the learned curvature-guided action $\hat u_W$ with the
Euclidean-projection baseline $u_I$.
The following result shows that the performance advantage established in the
ideal (non-training) case degrades by at most the same two error terms as in
\eqref{eq:near_opt_training_noR}. In particular, it yields an explicit
sufficient condition under which the learned solution $\hat u_W$ still
outperforms $u_I$ despite training error.

\begin{theorem}
\label{thm:compare_training}
Under Assumptions~A1--A2, let
\(
\delta=\|u_W-u_{\mathrm{ref}}\|_{W(x)} > 0,
\|u_I-u_{\mathrm{ref}}\|_{W(x)}=\beta\|u_W-u_{\mathrm{ref}}\|_{W(x)},
c=\tfrac{\beta^2-1}{2}.
\)
Then
\begin{equation}
\label{eq:compare_final_noR}
Q(x,\hat u_W)-Q(x,u_I)
\ge
c\,\delta^2
-
\frac{L_2D^3}{3\mu^3}
-
G\sqrt{2\rho}\,\frac{D}{\mu^2}.
\end{equation}
Consequently, a sufficient condition for $Q(x,\hat u_W)\ge Q(x,u_I)$ is
\begin{equation}\label{eq:condition_keep_advantage_noR}
\frac{G\sqrt{2\rho}\,D}{\mu^2}+\frac{L_2D^3}{3\mu^3}\le c\,\delta^2 .
\end{equation}

\end{theorem}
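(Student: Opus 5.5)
The plan is to split the gap $Q(x,\hat u_W)-Q(x,u_I)$ into a term for the ideal weighted projection and a term for the training error:
\[
Q(x,\hat u_W)-Q(x,u_I)=\big[Q(x,u_W)-Q(x,u_I)\big]+\big[Q(x,\hat u_W)-Q(x,u_W)\big].
\]
The first bracket is exactly the quantity treated in Theorem~2. The second is exactly the quantity bounded inside the proof of Theorem~\ref{thm:near_opt_training}. The claimed inequality \eqref{eq:compare_final_noR} should then follow by adding the two lower bounds.

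\textbf{Step 1 (ideal comparison).} I will reuse the argument of Theorem~2 verbatim. Write $Q=Q_2+R_3$ with the quadratic model $Q_2(x,u)=Q(x,u_{\mathrm{ref}})-\tfrac12\|u-u_{\mathrm{ref}}\|_{W(x)}^2$. The parameterization $\|u_I-u_{\mathrm{ref}}\|_W=\beta\delta$ gives $Q_2(x,u_W)-Q_2(x,u_I)=c\,\delta^2$. The remainders are controlled by A1 and the Euclidean radius bound \eqref{eq:euclid_bound_Dmu}:
\[
|R_3(u_W)-R_3(u_I)|\le \tfrac{L_2}{6}\cdot 2(D/\mu)^3=\tfrac{L_2D^3}{3\mu^3}.
\]
Together these give $Q(x,u_W)-Q(x,u_I)\ge c\,\delta^2-\tfrac{L_2D^3}{3\mu^3}$. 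This uses only $u_W,u_I\in\mathcal S$, which A2 provides.

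\textbf{Step 2 (training perturbation).} Since $\hat u_W\in\mathcal S$ (stated in the error model), the mean value theorem applies along the segment from $u_W$ to $\hat u_W$. That segment stays in $\mathcal S$ because $\mathcal S$ is a $W$-ellipsoid and hence convex. With $G$ as defined in Theorem~\ref{thm:near_opt_training}, this gives $Q(x,\hat u_W)-Q(x,u_W)\ge -G\|\hat u_W-u_W\|$. Lemma~\ref{lem:u_gap_training} then yields $\|\hat u_W-u_W\|\le\sqrt{2\rho}\,D/\mu^2$, so the second bracket is at least $-G\sqrt{2\rho}\,D/\mu^2$.

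\textbf{Step 3 (conclusion).} Adding the two bounds gives \eqref{eq:compare_final_noR}. The sufficient condition \eqref{eq:condition_keep_advantage_noR} is the rearrangement of requiring the right-hand side of \eqref{eq:compare_final_noR} to be nonnegative.

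The only delicate point is Step 2, namely justifying the mean value estimate. This needs convexity of $\mathcal S$ and the membership $\hat u_W\in\mathcal S$, which is assumed rather than derived. If that membership were not granted, I would instead bound $\|\hat u_W-u_{\mathrm{ref}}\|$ directly, using $\hat W\succeq(\mu^2-\rho)I$ and the optimality of $\hat u_W$ against $u_W$. That would enlarge the region on which $G$ must be taken. Everything else reuses earlier results without modification.
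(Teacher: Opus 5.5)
Your proposal is correct and follows the paper's proof essentially verbatim. It uses the same decomposition through $u_W$, Theorem~2 for the ideal comparison, and the mean value theorem combined with Lemma~\ref{lem:u_gap_training} for the training perturbation; your remark that the segment from $u_W$ to $\hat u_W$ stays in $\mathcal S$ because $\mathcal S$ is a convex $W(x)$-ellipsoid makes explicit a step the paper leaves implicit.
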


\begin{proof}
Decompose
\(
Q(x,\hat u_W)-Q(x,u_I)
=
\big(Q(x,u_W)-Q(x,u_I)\big)
+
\big(Q(x,\hat u_W)-Q(x,u_W)\big).
\)
By Theorem~2,
\(
Q(x,u_W)-Q(x,u_I)\ge c\,\delta^2-\frac{L_2D^3}{3\mu^3}.
\)
Moreover,
\(
Q(x,\hat u_W)-Q(x,u_W)
\ge
-|Q(x,\hat u_W)-Q(x,u_W)|
\ge
-G\|\hat u_W-u_W\|.
\)
Applying Lemma~\ref{lem:u_gap_training} yields
\(
Q(x,\hat u_W)-Q(x,u_W)
\ge
-\,G\sqrt{2\rho}\,\frac{D}{\mu^2}.
\)
Combining the above inequalities gives \eqref{eq:compare_final_noR}, and
\eqref{eq:condition_keep_advantage_noR} follows by requiring the right-hand side
of \eqref{eq:compare_final_noR} to be nonnegative.
\end{proof}

\noindent\textbf{Discussion.}
Theorems~\ref{thm:near_opt_training}--\ref{thm:compare_training} show that the learning-induced deviation from the ideal optimizer indeed degrades the guarantees through the additional term
$G\sqrt{2\rho}\,D/\mu^2$.
Nevertheless, the structure of the bounds remains consistent with the non-training case: both the near-optimality loss and the advantage over the Euclidean projection are controlled by the same curvature-regularity pair $(L_2,\mu)$, together with the training accuracy parameter $\rho$.
In particular, by choosing a sufficiently large $\mu$ (improving the conditioning of $W(x)$) and operating in regimes with moderate $L_2$ (mild Hessian variation), the sufficient condition \eqref{eq:condition_keep_advantage_noR} can still be satisfied, ensuring that the learned action $\hat u_W$ preserves the performance advantage, i.e., $Q(x,\hat u_W)\ge Q(x,u_I)$, despite training error.

It is important to emphasize that the Q-function learning and weighting matrix construction process is performed entirely offline. During online execution, only a convex projection problem needs to be solved, with computational cost comparable to Euclidean projection.


\section{SIMULATED EXAMPLE}
To evaluate the proposed safety filter, we present simulation results on a nonlinear quadrotor system. 
The system state is defined as
$x=[p^\top,v^\top,q^\top,\omega^\top]^\top \in \mathbb{R}^{13},$
where $p \in \mathbb{R}^3$ denotes position, $v \in \mathbb{R}^3$ linear velocity, $q \in \mathbb{R}^4$ unit quaternion attitude, and $\omega \in \mathbb{R}^3$ angular velocity. The control input is the normalized thrust of four rotors,
$u=[u_1,u_2,u_3,u_4]^\top \in [0,1]^4.$
The nominal controller is a learning-based nonlinear MPC\cite{hewing2020learning}
. The prediction model is formulated as
\begin{equation}
\dot x=f(x)+g(x)u,
\end{equation}
and discretized with a sampling time of $dt=0.1\,\mathrm{s}$ and prediction horizon $N=20$. The cost function is defined as the weighted quadratic sum of tracking error and control effort:
$J=\sum_{i=0}^{N-1}\Big[(x_i-x_{i+1}^{\mathrm{ref}})^\top Q (x_i-x_{i+1}^{\mathrm{ref}})
+u_i^\top R u_i\Big], $
with $Q=\mathrm{diag}(10,10,10,1,1,1,0.1,0.1,0.1,1,1,1,0.1)$ and $R=\mathrm{diag}(0.1,0.1,0.1,0.1)$.
Safety constraints are enforced using zeroing control barrier functions (ZCBFs) of relative degree 2, constructed for each spherical obstacle as
$\ddot h + 2\alpha_1\dot h + \alpha_0^2 h \;\ge\; -\sigma,  \alpha_0=\alpha_1=2.0,$
where $h(p)=\|p-c\|^2-(R+\mathrm{margin})^2$, and $\dot h,\ddot h$ are derived from velocity $v$ and acceleration induced by input $u$. The slack variable $\sigma\ge0$ allows soft constraint satisfaction, and the nominal control $u_{\mathrm{ref}}$ is minimally adjusted only when necessary to ensure safety.

For the weighting matrix, we collected transition samples from diverse nominal MPC trajectories and apply Fitted Q-Iteration with polynomial state features ($p_s=60$) and radial basis functions ($p_{\mathrm{nl}}=20$), using discount factor $\gamma=0.99$ and regularization weights $\alpha=0.5, \beta=0.2$.

Fig.~\ref{fig:quad_trajectory} shows both Euclidean and weighted projection ensure obstacle avoidance, with weighted projection achieving tighter tracking near obstacles. Fig.~\ref{fig:quad_Q} shows improved value preservation, and Fig.~\ref{fig:qual_runtime} demonstrates comparable computational efficiency.
\begin{figure}[htbp]
\centering
\includegraphics[width=\linewidth]{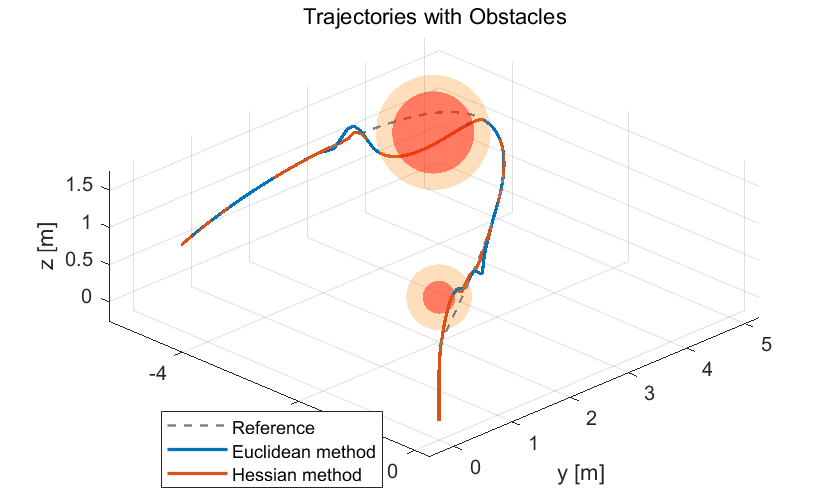}
\caption{Trajectory comparison between the Euclidean projection and the proposed Hessian-weighted projection. 
The dashed curve denotes the reference path, and the shaded circles indicate the obstacle and its safety margin. }
\label{fig:quad_trajectory}
\end{figure}

\begin{figure}[htbp]
\centering
\includegraphics[width=\linewidth]{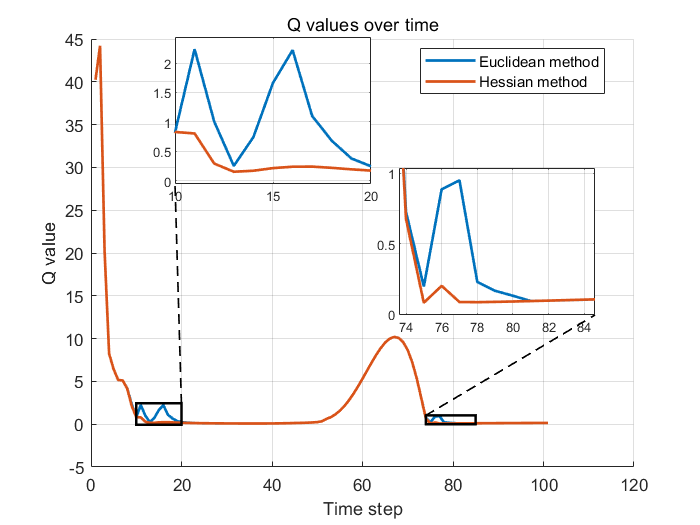}
\caption{Instantaneous difference in action-value function between the Hessian-weighted and Euclidean methods.}
\label{fig:quad_Q}
\end{figure}

\begin{figure}[htbp]
\centering
\includegraphics[width=\linewidth]{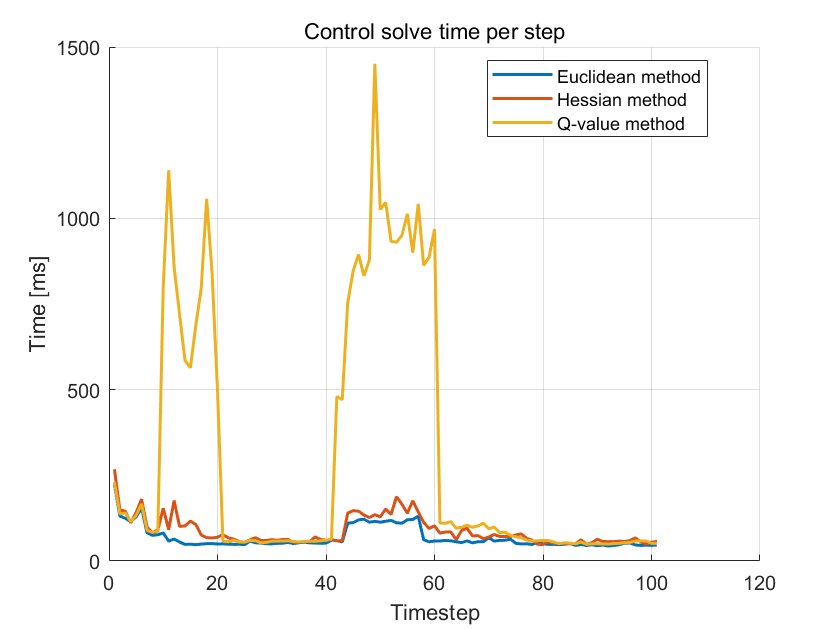}
\caption{IPer-step control solve time for the Euclidean, Hessian-weighted, and Q-value methods.}
\label{fig:qual_runtime}
\end{figure}
\section{CONCLUSION}
We proposed a safety filter that replaces Euclidean projection with a state-dependent weighted projection \(W(x)=-\nabla_u^2 Q(x,u_{\mathrm{ref}})\), using curvature of the action-value function to better preserve long-term performance. Under mild assumptions on Hessian Lipschitzness and negative definiteness, we derived bounds showing that the weighted projection remains near-optimal and can outperform Euclidean projection. To handle black-box controllers, we estimated \(W(x)\) from data via fitted Q-iteration with a structured feature map that enforces dominant quadratic curvature. Quadrotor simulations demonstrate that the method maintains safety while improving tracking and reducing \(Q\)-degradation, all with real-time computational efficiency.

The effectiveness of \(W(x)\) depends on the accuracy of the learned \(Q\)-function around \(u_{\mathrm{ref}}\). Incorporating uncertainty-aware curvature, enabling online adaptation, and improving robustness to model mismatch and time-varying constraints are promising directions.

\begingroup

\bibliographystyle{IEEEtranS}
\bibliography{IEEEabrv, ref}
\endgroup

\end{document}